\tikzstyle{internal} = [draw, fill, shape=circle]
\tikzstyle{external} = [shape=circle]
\tikzstyle{square}   = [draw, fill, rectangle]
\tikzstyle{triangle} = [draw, fill, regular polygon, regular polygon sides=3, inner sep=3pt]
\tikzstyle{pentagon} = [draw, fill, regular polygon, regular polygon sides=5, inner sep=2pt, minimum size=14pt]
\tikzset{every fit/.append style=text badly centered}
\tikzset{>=latex} 
\newcommand{\Ex}{\mathop{\mathbb{{}E}}\nolimits}
\def\*#1{\mathbf{#1}}
\def\+#1{\mathcal{#1}}
\def\-#1{\mathrm{#1}}
\def\=#1{\mathbb{#1}}
\newcommand{\abs}[1]{\ensuremath{\left\vert#1\right\vert}}
\newcommand{\RWup}[1]{\ensuremath{P^{\wedge}_{#1}}}
\newcommand{\Pup}[1]{\ensuremath{P^{\uparrow}_{#1}}}
\newcommand{\RWdown}[1]{\ensuremath{P^{\vee}_{#1}}}
\newcommand{\Pdown}[1]{\ensuremath{P^{\downarrow}_{#1}}}
\newcommand{\Diri}[2]{\ensuremath{\+E_{#1}\left(#2\right)}}
\newcommand{\Ent}[2]{\ensuremath{\textnormal{Ent}_{#1}\left(#2\right)}}
\newcommand{\Var}[2]{\ensuremath{\textnormal{Var}_{#1}\left(#2\right)}}
\newcommand{\diag}{\operatorname{diag}}
\newcommand{\transpose}[1]{\ensuremath{#1^{\textnormal{\texttt{T}}}}}
\newcommand{\defeq}{:=}
\newcommand{\identity}{\ensuremath{\*I}}
\newtheorem{theorem}{Theorem}
\newtheorem{lemma}[theorem]{Lemma}
\newtheorem{proposition}[theorem]{Proposition}
\newtheorem{corollary}[theorem]{Corollary}
\crefname{theorem}{Theorem}{Theorems}
\crefname{observation}{Observation}{Observations}
\crefname{claim}{Claim}{Claims}
\crefname{condition}{Condition}{Conditions}
\crefname{algorithm}{Algorithm}{Algorithms}
\crefname{property}{Property}{Properties}
\crefname{example}{Example}{Examples}
\crefname{fact}{Fact}{Facts}
\crefname{lemma}{Lemma}{Lemmas}
\crefname{corollary}{Corollary}{Corollaries}
\crefname{definition}{Definition}{Definitions}
\crefname{remark}{Remark}{Remarks}
\crefname{proposition}{Proposition}{Propositions}
\crefname{equation}{equation}{equations}
\crefname{enumi}{Case}{Case}
\def\prob#1#2#3{\goodbreak\begin{list}{}{\labelwidth\z@ \itemindent-\leftmargin
      \itemsep\z@  \topsep6\p@\@plus6\p@
      \let\makelabel\descriptionlabel}
  \item[\textbf{Name}]#1
  \item[\textbf{Instance}]#2
  \item[\textbf{Output}]#3
  \end{list}}
\providecommand\@dotsep{5}
\def\listtodoname{Todo list}
\def\listoftodos{\@starttoc{tdo}\listtodoname}
\title{Local-to-Global Contraction in Simplicial Complexes}
\author{Heng Guo}
\author{Giorgos Mousa}
\address{School of Informatics, University of Edinburgh, Informatics Forum, Edinburgh, EH8 9AB, United Kingdom.}
\email{hguo@inf.ed.ac.uk, g.mousa@ed.ac.uk}
\begin{document}

\begin{abstract}
  We give a local-to-global principle for relative entropy contraction in simplicial complexes.
  This is similar to the local-to-global principle for variances obtained by \cite{AL20}.
\end{abstract}

\maketitle

\section{Introduction}

High-dimensional expanders are a powerful new tool for analyzing mixing times of Markov chains which has resulted in many recent successes \citep{ALOV19,CGM21,ALO20,CLV20a,FGYZ20,CGSV20,CLV20}.
The main insight is to recast Markov chains as random walks over faces of simplicial complexes of the same cardinality.
The mixing time of these ``global'' random walks can be bounded via analyzing certain ``local'' walks, thanks to the inductive structure of simplicial complexes.
These local walks are defined over the skeleton of the faces of the simplicial complex, and are typically much easier to analyze.
(Detailed definitions are given in \Cref{sec:prelim}.)

This ``local-to-global'' principle is the key to all of the recent development along this line of research.
Roughly speaking, the local walks contain enough information about the high-dimensional global walks,
yet in the mean time they are much more tractable to analyze.
This methodology turns out to be very powerful \citep{DK17,Opp18,KO20,KM20,LMY20}.
In particular, the result of \cite{KO20} plays a crucial role in resolving the long-standing open problem for the expansion of the basis-exchange graphs for matroids by \cite{ALOV19}.
However, these early results do not give non-trivial bounds unless the eigenvalues of the local walks are sufficiently small.
In contrast, \cite{AL20} obtain a great improvement where the bound is always meaningful as long as the local walks are not completely trivial,
which is currently the best result regarding the local-to-global principle for eigenvalues (or equivalently for variance contractions).

Despite these progresses, in many cases, 
there are inherent losses if one bounds the mixing times via eigenvalues or variance contractions \citep{LP17}.
A tighter relationship can be obtained via relative entropy contractions \citep{DS96}.
Although relative entropy decay is typically much harder to analyze than eigenvalues,
this method has been successfully applied to many problems, e.g.~\citep{JSTV04,Mor13}.
In the high-dimensional expander context,
\cite{CGM21} obtained optimal relative entropy contraction for the aforementioned matroid setting to get sharp mixing time bounds.

In this note we give a local-to-global principle for relative entropy contraction (\Cref{thm:main-ent}).
Our result is similar to that of \cite{AL20} for variance contraction, 
in the sense that our bounds are always non-trivial as long as the local walks are not completely trivial. 
In fact, our proof also works for variance contraction.
However, in that case our bounds are quantitatively no better than that of \cite{AL20}, 
because the local spectral gaps should satisfy the ``trickling-down'' theorem of \cite{Opp18} (see \Cref{sec:comparison} for some detailed comparison).
In some interesting special cases, our bound actually coincides with the counterpart of \cite{AL20} (see \Cref{cor:trickling-down-profile}).
In any case, our main interest lies in the local-to-global principle for relative entropy contraction,
which would lead to improved mixing time bounds in certain applications.

Our main result, \cref{thm:main-ent}, is independently obtained by \citet[Theorem 5.4]{CLV20}.
In fact, this theorem is one of the key ingredients of \cite{CLV20} to obtain optimal mixing times for Glauber dynamics over spin systems up to certain uniqueness conditions.
We refer the interested readers to their work for applications.
In this note we focus on the local-to-global principle for relative entropy contraction.

\section{Preliminaries}\label[section]{sec:prelim}
The main things that need to be defined in this section are simplicial complexes, as well as the distributions and random walks over them.
\subsection{Simplicial Complexes}

An (abstract) simplicial complex $\+C=(E,\+S)$ is a tuple of a ground set of elements $E$, and a nonempty downwards closed collection of sets $\+S$ (faces):
\begin{itemize}
    \item $\emptyset\in\+S$;
    \item if $S\in\+S$, $T\subseteq S$, then $T\in\+S$.
\end{itemize}
A pure simplicial complex has maximal sets of the same cardinality $d$. We denote by $\+C(k)$ the collection of sets of size $k$, where $0\le k\le d$.

We define the following distributions over $\+S$. First, a distribution $\pi_d$ is given on the top level, $\+C(d)$, and then lower level distributions defined as follows:
\begin{align*}
    \pi_k(S)\propto \sum_{T\in\+C(d) : T\supset S}\pi_d(T), && 0 \le k<d.
\end{align*}
Let $D_k:=\diag(\pi_k)$.
For a face $S\in\+S$, the link $\+C_S=(E\setminus S, \+S_S)$ is also a simplicial complex, 
where $\+S_S=\{T\mid T\subseteq E\setminus S, T\cup S\in\+S\}$.
We may similarly define distributions $\pi_{S,k}$ over $\+C_S(k)$ simply as
\begin{align*}
    \pi_{S,k}(T) \propto \pi_{\abs{S}+k}(T \cup S) , && 0 \le k \le d-\abs{S}.
\end{align*}
It is also convenient to equip $\+C$ with a weight function $w$, recursively defined as follows:
\begin{align*}
  w(S)\defeq 
  \begin{cases}
    \pi(S) & \text{if $\abs{S}=d$},\\
    \sum_{T\supset S,\; \abs{T}=\abs{S}+1} w(T) & \text{if $\abs{S}<d$}.
  \end{cases}  
\end{align*}
We will refer to the weight function of $\+C(k)$ as $w_k$, and, as before, the weight function of $\+C_S(k)$ will be denoted by $w_{S,k}$.

\subsection{Up and down random walks}

There are two natural exchange walks on $\+C(k)$, $\RWup{k}$ and $\RWdown{k}$, which start by adding or removing an element and coming back to $\+C(k)$. We call these walks ``global'' as they are defined over the whole of $\+C(k)$. They are comprised by the same two parts:
\begin{itemize}
    \item ``Going-up'', \Pup{k}; starting from a set $S\in\+C(k)$, we add an element $i\in E \setminus S$ with probability $\propto \pi_{k+1}(S \cup i)$.
    \item ``Going-down'', \Pdown{k}; starting from a set $S\in\+C(k)$, we remove an element $i\in S$ uniformly at random.
\end{itemize}
We can now write
\begin{align}
  \RWup{k} & = \Pup{k}\Pdown{k+1}, & 
  \RWdown{k}& = \Pdown{k}\Pup{k-1}.   \label{eqn:down-decomp}
\end{align}
We shall also use the notation \RWup{S,k} and \RWdown{S,k}, where $S\in\+S$, to denote the walks on $\+C_S(k)$.
It is easy to check the detailed balance condition to see that \RWup{S,k} and \RWdown{S,k} are reversible with respect to $\pi_{S,k}$.

The ``local'' walks that will be particularly interesting are the walks $\RWup{S,1}$ and $\RWdown{S,2}$ for every face $S\in\+S$ with $\abs{S}\le d-2$. 
These walks are not completely local as they have weights that count higher level supersets. One can observe that the transition matrices of these walks are cospectral, because they are of the form $AB$ and $BA$.

The walks $G_S\defeq2\RWup{S,1}-I$ are the non-lazy version of the local walks $\RWup{S,1}$. 
A simplicial complex $\+C$ is called a $(a_0,...,a_{d-2})$-local-spectral expander if for any $S \in \+C(k)$, where $0 \le k\le d-2$,
\begin{align*}
  \lambda_2(G_S)\le a_{k},
\end{align*}
where $\lambda_2(\cdot)$ is the second largest eigenvalue.
We call the vector $(a_0,\dots,a_{d-2})$ satisfying the above a \emph{spectral profile} of $\+C$.

\begin{theorem}[Main Theorem of \citealp{AL20}]\label[theorem]{thm:AL}
    Let $\+C$ be a simplicial complex that is a $(a_0,...,a_{d-2})$-local-spectral expander. Then, for any $2\le k\le d$,
    \begin{align*}
      \lambda_2(\RWdown{k}) \le 1-\frac{1}{k}\prod_{i=0}^{k-2}\left(1-a_i\right).
    \end{align*}
  \end{theorem}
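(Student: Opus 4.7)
The plan is to induct on $k$ using the Poincar\'e characterization of the spectral gap $\alpha_k := 1 - \lambda_2(\RWdown{k})$. For the base case $k = 2$, I would use cospectrality: $\RWdown{2} = \Pdown{2}\Pup{1}$ and $\RWup{1} = \Pup{1}\Pdown{2}$ share their nonzero eigenvalues (they are matrix products $BA$ and $AB$), so $\lambda_2(\RWdown{2}) = \lambda_2(\RWup{1}) = (1 + \lambda_2(G_\emptyset))/2 \leq (1+a_0)/2$, matching the target $1 - (1-a_0)/2$.

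For the inductive step $k \geq 3$, the first key step is to derive a local-to-global identity for the Dirichlet form,
\[ \Diri{\RWdown{k}}{f} = \Ex_{R \sim \pi_{k-2}}[\Diri{\RWdown{R,2}}{f|_R}], \]
obtained by unfolding the two-step walk $\Pdown{k}\Pup{k-1}$ and matching the joint law of a top face $S \sim \pi_k$ together with a uniformly chosen size-$(k-2)$ subface $R \subset S$ to the law ``$R \sim \pi_{k-2}$ then a pair drawn from $\pi_{R,2}$''. Second, at each such $R$ the local expansion $\lambda_2(G_R) \leq a_{k-2}$ combined with cospectrality of $\RWup{R,1}$ and $\RWdown{R,2}$ gives $\Diri{\RWdown{R,2}}{f|_R} \geq \tfrac{1-a_{k-2}}{2}\Var_{\pi_{R,2}}(f|_R)$; averaging together with the two-step law of total variance $\Ex_R[\Var_{\pi_{R,2}}(f|_R)] = \Var_k(f) - \Var_{k-2}(h)$, where $h := \Pup{k-2}\Pup{k-1}f$, yields
\[ \Diri{\RWdown{k}}{f} \geq \frac{1-a_{k-2}}{2}\bigl(\Var_k(f) - \Var_{k-2}(h)\bigr). \]
Third, I would eliminate $\Var_{k-2}(h)$ by invoking the inductive hypothesis on $\RWdown{k-1}$ acting on $g := \Pup{k-1}f$ --- giving $\Var_{k-2}(h) \leq \lambda_2(\RWdown{k-1})\Var_{k-1}(g)$ --- together with the one-step identity $\Var_{k-1}(g) = \Var_k(f) - \Diri{\RWdown{k}}{f}$, and then solve the resulting linear inequality for $\Diri{\RWdown{k}}{f}/\Var_k(f)$ to extract a recursion on $\alpha_k$ in terms of $\alpha_{k-1}$ and $a_{k-2}$.

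The hard part will be coaxing this recursion into the clean product form $\alpha_k \geq \tfrac{1}{k}\prod_{i=0}^{k-2}(1-a_i)$. Executed straightforwardly the plan above gives $\alpha_k \geq (1-a_{k-2})\alpha_{k-1}/(1 + a_{k-2} + (1-a_{k-2})\alpha_{k-1})$, which in the uniform regime $a_i \equiv 0$ correctly collapses to $1/k$ but does not obviously match the AL target term-by-term for general $a_i$. To recover the exact form I expect one needs the complementary decomposition $\Diri{\RWdown{k}}{f} = \Ex_{R \sim \pi_1}[\Diri{\RWdown{R,k-1}}{f|_R}]$ and applying the inductive hypothesis at each link $\+C_R$ (a complex of dimension $d-1$ with shifted spectral profile $(a_1,\ldots,a_{d-2})$), so that the $1/(k-1)$ factor from the link bound combines with a separate handling of $\Var_1(\Pup{1}\cdots\Pup{k-1}f)$ via the expansion at $\emptyset$ to manufacture the $1/k$ prefactor. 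Picking the right decomposition and the right way to combine the link bounds is the delicate part of the argument.
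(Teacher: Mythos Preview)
The paper does not prove this theorem at all; it merely quotes it as the main result of \cite{AL20}. There is no ``paper's own proof'' of \Cref{thm:AL} to compare against.

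What the paper \emph{does} prove is the weaker \Cref{thm:main-var}, and interestingly your first approach reproduces exactly that. Your level-$(k{-}2)$ Dirichlet decomposition, combined with the local bound and the one-step identity $\Var{\pi_{k-1}}{g}=\Var{\pi_k}{f}-\Diri{\RWdown{k}}{f}$, yields the recursion
\[
  \alpha_k \;\ge\; \frac{(1-a_{k-2})\,\alpha_{k-1}}{1+a_{k-2}+(1-a_{k-2})\,\alpha_{k-1}},
\]
which is algebraically identical to the paper's recursion $v_{k-2}=s_{k-2}-(s_{k-2}-1)/v_{k-3}$ with $\lambda_2(\RWdown{k})\le 1/v_{k-2}$ (substitute $\mu_k=1-\alpha_k=1/v_{k-2}$ and $s_{k-2}=2/(1+a_{k-2})$ to see this). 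The paper then devotes \Cref{sec:comparison} and \Cref{prop:comparison} to showing that this recursion is \emph{never better} than the Alev--Lau product bound for admissible spectral profiles, with equality only when the trickling-down theorem is tight. So your diagnosis that this recursion ``does not obviously match the AL target'' is exactly right, and in fact it provably falls short in general.

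Your second suggestion---decompose at level $1$, apply the inductive hypothesis inside each link $\+C_v$ with the shifted profile $(a_1,\dots,a_{d-2})$, and handle $\Var{\pi_1}{f^{(1)}}$ via the expansion of $G_\emptyset$---is indeed the route \cite{AL20} take, but that argument is not carried out in this paper and you have only sketched it. If your goal is to match the paper, the honest answer is that it offers no proof of \Cref{thm:AL}; if your goal is to prove \Cref{thm:AL}, you should abandon the level-$(k{-}2)$ recursion and fully execute the level-$1$ link induction.
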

This is an example of a ``local-to-global'' theorem which is the type of theorem we are aiming for.

We define the \emph{Dirichlet form} of a reversible Markov chain $P$, over state space $\Omega$, as
\begin{align*}
  \Diri{P}{f,g} \defeq  \sum_{x,y\in\Omega}\pi(x)f(x)\big[\identity-P\big](x,y)g(y) = \transpose{f}diag(\pi)(I-P)g,
\end{align*}
where $f,g$ are two functions over $\Omega$, and $\pi$ is the stationary distribution of $P$.
Let the variance of $f$ be
\begin{align}\notag
  \Var{\pi}{f}\defeq\Ex_{\pi}f^2-\left(\Ex_{\pi} f\right)^2.
\end{align}
The Poincar\'e inequality for $P$ is a variational way of characterizing the spectral gap:
\begin{align}\label{eqn:poincare}
  1-\lambda_2(P) = \inf\left\{\frac{\Diri{P}{f,f}}{\Var{\pi}{f}}\mid\; f:\Omega\rightarrow \=R\;,\;\Var{\pi}{f}\neq 0\right\}.
\end{align}

For $k\ge 2$ and a function $f^{(k)}:\+C(k)\rightarrow\=R$,
define $f^{(i)}:\+C(i)\rightarrow\=R$ for $1\le i\le k-1$ such that
\begin{align}  \label{eqn:f-level-i}
  f^{(i)}\defeq\prod_{j=i}^{k-1}\Pup{j} f^{(k)}.
\end{align}
For variances, we have
\begin{align}
  \Diri{\RWdown{k}}{f^{(k)},f^{(k)}} & = \Var{\pi_k}{f^{(k)}} - \Var{\pi_{k-1}}{f^{(k-1)}}; \label{eqn:var-equiv} \\
  \Diri{\RWup{k-1}}{f^{(k-1)},f^{(k-1)}} & = \Var{\pi_{k-1}}{f^{(k-1)}} - \Var{\pi_{k}}{\Pdown{k}f^{(k-1)}}. \label{eqn:var-equiv-up}
\end{align}
These equalities imply that the Poincar\'e inequalities for the walks $\RWdown{k}$ and $\RWup{k-1}$ are equivalent to variance contraction for the ``half'' walks $\Pdown{k}$ and $\Pup{k-1}$, respectively.

Moreover, let the (normalised) relative entropy of $f:\Omega\rightarrow\=R_{\ge 0}$ be
\begin{align*}
    \Ent{\pi}{f}\defeq\Ex_{\pi}(f\log f)-\Ex_{\pi}f\log \Ex_{\pi}f,
\end{align*}
where we follow the convention that $0\log 0=0$.
A related notion is the Kullback--Leibler (KL) divergence
  $D(\tau~\|~\pi)\defeq\sum_{x\in\Omega}\tau(x)\log\left( \frac{\tau(x)}{\pi(x)} \right),$
where $\tau$ and $\pi$ are two distributions over the same $\Omega$.
Indeed, $D(\tau~\|~\pi)=\Ent{\pi}{f}$ where $f=\frac{\tau}{\pi}$.
The modified log-Sobolev constant \citep{BT06} is defined as
\begin{align}\label{eqn:mLSI}
  \rho(P)\defeq\inf\left\{\frac{\Diri{P}{f,\log f}}{\Ent{\pi}{f}}\mid\; f:\Omega\rightarrow \=R_{\ge 0}\;,\;\Ent{\pi}{f}\neq 0\right\}.
\end{align}
A related quantity is the relative entropy contraction ratio.

For entropies and $f^{(k)}:\+C(k)\rightarrow\=R_{\ge 0}$, we get inequalities instead of equalities as is the case for variances, \eqref{eqn:var-equiv} and \eqref{eqn:var-equiv-up},
\begin{align}
  \Diri{\RWdown{k}}{f^{(k)},\log f^{(k)}} & \ge \Ent{\pi_k}{f^{(k)}} - \Ent{\pi_{k-1}}{f^{(k-1)}}; \label{eqn:entropy-ineq}\\
  \Diri{\RWup{k-1}}{f^{(k-1)},\log f^{(k-1)}} & \ge \Ent{\pi_{k-1}}{f^{(k-1)}} - \Ent{\pi_{k}}{\Pdown{k}f^{(k-1)}}\label{eqn:entropy-ineq-up}.
\end{align}
Thus, entropy contractions of $\Pdown{k}$ and $\Pup{k-1}$ imply modified log-Sobolev inequalities of $\RWdown{k}$ and $\RWup{k-1}$, respectively,
but not the other way around, unlike the variance case.

One important result regarding the spectral profile of a simplicial complex is the trickling down theorem due to \cite{Opp18}.
Here we give an alternative variational proof.
\begin{theorem}[Trickling down theorem, \citealp{Opp18}]\label[theorem]{thm:trickling-down}
    Let $\+C$ be a simplicial complex that is a $\gamma$-local-spectral expander at level $k$, meaning that $\lambda_2(G_T)\le \gamma <1$ for all $T \in \+C(k)$. 
    Then, if $\lambda_2(G_S)<1$ (connectedness) for some $S \in \+C(k-1)$,
    \begin{align*}
        \lambda_2(G_S) \le \frac{\gamma}{1-\gamma}.
    \end{align*}
\end{theorem}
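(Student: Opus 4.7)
The plan is to give a variational proof via the Poincar\'e characterization~\eqref{eqn:poincare}. The first observation, obtained by unfolding the definitions of $\Pup{S,1}$ and $\Pdown{S,2}$, is that $G_S=2\RWup{S,1}-I$ is exactly the non-lazy weighted random walk on the $1$-skeleton of $\+C_S$ with edge weights $\pi_{S,2}(\{x,y\})$, so that
\begin{equation*}
  \Diri{G_S}{f,f}\;=\;\tfrac12\,\Ex_{\{x,y\}\sim\pi_{S,2}}\big[(f(x)-f(y))^2\big].
\end{equation*}
The same holds one level deeper in every link: for each $y\in\+C_S(1)$, since $S\cup\{y\}\in\+C(k)$, the hypothesis gives $\lambda_2(G_{S\cup\{y\}})\le\gamma$, and \eqref{eqn:poincare} yields the local Poincar\'e inequality
\begin{equation*}
  \Diri{G_{S\cup\{y\}}}{f,f}\;\ge\;(1-\gamma)\,\Var{\pi_{S\cup\{y\},1}}{f}.
\end{equation*}

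The core of the proof is to average these local inequalities against $\pi_{S,1}(y)$ and to identify both sides with global quantities on $\+C_S$. For the right-hand side, the marginal identity $\pi_{S,1}(y)\,\pi_{S\cup\{y\},1}(x)=\tfrac12\pi_{S,2}(\{x,y\})$, together with the observation that $\Ex_{\pi_{S\cup\{y\},1}}f=(G_S f)(y)$, gives the law-of-total-variance identity
\begin{equation*}
  \sum_{y}\pi_{S,1}(y)\,\Var{\pi_{S\cup\{y\},1}}{f}\;=\;\Var{\pi_{S,1}}{f}-\Var{\pi_{S,1}}{G_S f}.
\end{equation*}
For the left-hand side, the weight recursion $w(S\cup\{x,x'\})=\sum_y w(S\cup\{x,x',y\})$ applied inside $\+C_S$ produces the companion identity $\sum_y \pi_{S,1}(y)\,\pi_{S\cup\{y\},2}(\{x,x'\})=\pi_{S,2}(\{x,x'\})$, which telescopes the Dirichlet average to
\begin{equation*}
  \sum_{y}\pi_{S,1}(y)\,\Diri{G_{S\cup\{y\}}}{f,f}\;=\;\Diri{G_S}{f,f}.
\end{equation*}
Combining these with the local bound yields the master estimate
\begin{equation*}
  \Diri{G_S}{f,f}\;\ge\;(1-\gamma)\,\big(\Var{\pi_{S,1}}{f}-\Var{\pi_{S,1}}{G_S f}\big).
\end{equation*}

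To conclude, I would test the master estimate on the second eigenfunction $\phi$ of $G_S$, with eigenvalue $\lambda\defeq\lambda_2(G_S)$ and $\Ex_{\pi_{S,1}}\phi=0$. Then $\Diri{G_S}{\phi,\phi}=(1-\lambda)\Var{\pi_{S,1}}{\phi}$ and $\Var{\pi_{S,1}}{G_S\phi}=\lambda^2\Var{\pi_{S,1}}{\phi}$, so the estimate reduces to $1-\lambda\ge(1-\gamma)(1-\lambda)(1+\lambda)$. The connectedness hypothesis $\lambda<1$ makes it legal to divide by $1-\lambda$, yielding $1\ge(1-\gamma)(1+\lambda)$, equivalently $\lambda\le\gamma/(1-\gamma)$, as claimed. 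The main obstacle I anticipate is not conceptual but notational: keeping the normalizing constants of $\pi_{S,k}$ versus $\pi_{S\cup\{y\},k}$ straight while verifying the two averaging identities. Once set up, the rest is a one-line algebraic manipulation, together with the final division enabled by the connectedness hypothesis.
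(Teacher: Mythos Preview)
Your proposal is correct and follows essentially the same variational approach as the paper: both establish the master inequality $\Diri{G_S}{f,f}\ge(1-\gamma)\bigl(\Var{\pi_{S,1}}{f}-\Var{\pi_{S,1}}{G_S f}\bigr)$ by averaging the local Poincar\'e inequalities over the vertices of the link and then plug in the second eigenfunction. The only cosmetic difference is that the paper first reduces to $S=\emptyset$ and carries out the decomposition in matrix notation (via \eqref{eqn:up-decomposition-D}, \eqref{eqn:up-decomposition-G}, \eqref{eqn:G-and-pi_v}), whereas you phrase the same two averaging identities probabilistically for general $S$.
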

\begin{proof}
It suffices to prove the theorem for $k=1$. For any face $S$, let $D_{S,1} \defeq diag(\pi_{S,1})$, and note the decomposition
\begin{align} \label{eqn:up-decomposition-D}
    D_1 = \sum_{v \in \+C(1)}\pi_1(v)\overline{D_{v,1}},
\end{align}
where the overline denotes the extension by zeros to the appropriate dimension. Also, for any face $S$, $G_S = diag(w_{S,1})^{-1}W_S$, where $W_S(u,v) \defeq w_{S,2}(\{u,v\})$. We also have the following decomposition,
\begin{align*}
  G_\emptyset = diag(w_1)^{-1}W_\emptyset &= \frac{1}{w(\emptyset)}D_1^{-1}\sum_{v \in \+C(1)}\overline{W_v} = D_1^{-1}\sum_{v \in \+C(1)}\frac{\overline{diag(w_{v,1})G_v}}{w(\emptyset)},
\end{align*}
or equivalently,
\begin{align} \label{eqn:up-decomposition-G}
  G_\emptyset &= D_1^{-1}\sum_{v \in \+C(1)}\pi_1(v)\overline{D_{v,1}}\overline{G_v}.
\end{align}
Moreover, notice that $G_{\emptyset}(v,u) = \pi_{v,1}(u)$ and for any function $f:\+C(1)\rightarrow\=R$,
\begin{align}\label{eqn:G-and-pi_v}
  G_\emptyset f(v) = \sum_{u\in\+C(1)}G_{\emptyset}(v,u)f(u) = \overline{\pi_{v,1}}f.
\end{align}
Given these, for any function $f:\+C(1)\rightarrow\=R$, we can write
\begin{align*}
    \Diri{G_\emptyset}{f,f} &= \transpose{f}D_1\left(I-G_\emptyset\right)f\\
    &= \sum_{v \in \+C(1)}\pi_1(v)\left[\transpose{f}\overline{D_{v,1}}\left(I-\overline{G_v}\right) f\right]\tag{by \eqref{eqn:up-decomposition-D} and \eqref{eqn:up-decomposition-G}}\\
    &= \sum_{v \in \+C(1)}\pi_1(v)\Diri{G_v}{f_v,f_v} \tag{where $f_v$ is $f$ restricted to $\+C_v(1)$}\\
    &\ge (1-\gamma)\sum_{v \in \+C(1)}\pi_1(v)\Var{\pi_{v,1}}{f_v}\tag{because $\lambda_2(G_v)\le \gamma$}\\
    &= (1-\gamma)\sum_{v \in \+C(1)}\pi_1(v)\left[\transpose{f}\overline{D_{v,1}}f-(\overline{\pi_{v,1}}f)^2\right]\\
    &= (1-\gamma)\left[\Var{\pi_1}{f}+(\pi_1 f)^2-\sum_{v \in \+C(1)}\pi_1(v)\left([G_\emptyset f](v)\right)^2\right]\tag{by \eqref{eqn:up-decomposition-D} and \eqref{eqn:G-and-pi_v}}\\
    &= (1-\gamma)\left[\Var{\pi_1}{f}-\Var{\pi_1}{G_\emptyset f}\right].
\end{align*}
Since the inequality above holds for any $f$, we can choose $f=v_i$, where $v_i$ is an eigenvector of $G_\emptyset$ corresponding to eigenvalue $\lambda_i$. Then,
\begin{align*}
  \Diri{G_\emptyset}{v_i,v_i} &\ge (1-\gamma)\left[\Var{\pi_1}{v_i}-\Var{\pi_1}{G_\emptyset v_i}\right],
\end{align*}
which simplifies into
\begin{align*}
  (1-\lambda_i)\transpose{v_i}D_1 v_i &\ge (1-\gamma)(1-\lambda_i^2)\transpose{v_i}D_1 v_i.
\end{align*}
Thus, $(1-\lambda_i) \ge (1-\gamma)(1-\lambda_i^2)$.
In particular, if $\lambda_2(G_\emptyset)<1$, then
\begin{align*}
    \lambda_2(G_\emptyset) &\le \frac{\gamma}{1-\gamma}.\qedhere
\end{align*}
\end{proof}

An interesting question is to find a similar theorem for entropies or modified log-Sobolev constants.
However, a straightforward generalisation of the proof above would involve the ratio between $\Ent{\pi_1}{G_{\emptyset}f}$ and $\Ent{\pi_1}{f}$.
For variances, the ratio between $\Var{\pi_1}{G_{\emptyset}f}$ and $\Var{\pi_1}{f}$ can be related to $\lambda_2$, as shown in the proof.
For relative entropy, that ratio does not seem to be directly related to $\rho(G_{\emptyset})$.
In fact, $\rho$ is the change rate of the relative entropy for the continuous-time Markov chain \citep{BT06}, 
and can be related to the change rate for the discrete-time chain if there is no negative eigenvalue, as shown by \cite{Mic97}.
However, this is not the case for $G_{\emptyset}$.
It remains open to find relative entropy contraction descent similar to \Cref{thm:trickling-down}.

\section{Decay of variance}

We first study the local-to-global principle for variances.
We show a theorem similar to the main result of \cite{AL20}.
The bound we obtain will be given by a recursion, and seems to be incomparable to that of \cite{AL20} at first sight.
However, it turns out that our bound is weaker for spectral profiles satisfying the trickling down theorem,
and it coincides with the bound of \cite{AL20} when the trickling down theorem is tight.
Nonetheless, our proof approach is different from that of \cite{AL20},
and our approach has the advantage that it generalises to the entropy case as shown in \Cref{sec:entropy}.

We use the following decomposition, which was shown by \cite{CGM21}.
For completeness we give a proof here.
\begin{lemma}  \label[lemma]{lem:decomposition}
  Let $k\ge 2$ and $f^{(k)}:\+C(k)\rightarrow \=R$ be a function on $\+C(k)$.
  Then,
  \begin{align*}
    \Var{\pi_{k}}{f^{(k)}} &= \sum_{S\in\+C(k-2)}\pi_{k-2}(S)\Var{\pi_{S,2}}{f_S^{(2)}} + \Var{\pi_{k-2}}{f^{(k-2)}},
  \end{align*}
  where $f_S^{(2)}(T)\defeq f^{(k)}(S\cup T)$ for $T\in\+C_S(2)$.
  Moreover, the same decomposition holds for $f^{(k)}:\+C(k)\rightarrow \=R_{\ge 0}$ with $\Var{}{}$ replaced by $\Ent{}{}$.
\end{lemma}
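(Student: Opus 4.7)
The plan is to recognise the identity as the law of total variance (and, for the entropy statement, the corresponding chain rule for relative entropy) applied to the natural two-step sampling procedure for $\pi_k$: first draw $S\sim\pi_{k-2}$, then draw $T\in\+C_S(2)$ from $\pi_{S,2}$, and set $U=S\cup T$. Once I show that this scheme reproduces $\pi_k$ and that its conditional expectation of $f^{(k)}$ given $S$ equals $f^{(k-2)}(S)$, both identities follow from standard total-law formulas.

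First I would unroll the definition of $\pi_j$ iteratively. Applying the relation $\pi_j(S)\propto\sum_{T\supset S,\,\abs{T}=j+1}\pi_{j+1}(T)$ twice shows that $\pi_{k-2}(S)=\frac{1}{\binom{k}{2}}\sum_{U\supset S,\,\abs{U}=k}\pi_k(U)$ for every $S\in\+C(k-2)$. Combined with the definition $\pi_{S,2}(T)\propto\pi_k(S\cup T)$, the joint law of $(S,U)$ under the two-step procedure is $\pi_{k-2}(S)\,\pi_{S,2}(U\setminus S)=\pi_k(U)/\binom{k}{2}$ for any pair with $\abs{S}=k-2$, $\abs{U}=k$, $S\subset U$. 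Summing over the $\binom{k}{2}$ choices of $S\subset U$ recovers $\pi_k(U)$, so the $U$-marginal is indeed $\pi_k$.

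Next I would check that $\Ex_{T\sim\pi_{S,2}}[f_S^{(2)}(T)]=f^{(k-2)}(S)$. Because $\Pup{j}$ is exactly the going-up kernel from $\+C(j)$ to $\+C(j+1)$, the definition $f^{(k-2)}=\Pup{k-2}\Pup{k-1}f^{(k)}$ makes $f^{(k-2)}(S)$ the expectation of $f^{(k)}(U)$ under two successive up-steps starting from $S$, and the same normalisation bookkeeping as in the previous paragraph shows that the resulting conditional law of $U$ given $S$ is precisely $\pi_{S,2}(U\setminus S)$, giving the desired identification.

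With these two ingredients the variance identity is the law of total variance applied to the pair $(S,U)$, and the entropy identity follows from the corresponding chain rule for relative entropy on a joint distribution, which is standard and proved by direct manipulation of $f\log f$ (splitting $\Ex_{\pi_k}[f^{(k)}\log f^{(k)}]$ via $S$ and adding and subtracting $f^{(k-2)}(S)\log f^{(k-2)}(S)$). I do not anticipate any conceptual obstacle; the only point of care is tracking the combinatorial factors arising from iterating the recursive definitions of $\pi_j$ and the normalisation of $\pi_{S,2}$.
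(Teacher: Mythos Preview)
Your proposal is correct and follows essentially the same route as the paper. The paper, too, establishes the identity $\pi_k(I)=\sum_{S\subset I}\pi_{k-2}(S)\pi_{S,2}(I\setminus S)$ and the identification $f^{(k-2)}(S)=\Ex_{\pi_{S,2}}f_S^{(2)}$, then carries out the law-of-total-variance computation directly (after normalising to mean zero) rather than naming it as such; your phrasing in terms of the two-step sampling and the total-law formulas is the same argument in probabilistic language.
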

\begin{proof}
  We may assume that $\Ex_{\pi_k}f^{(k)}=0$.
  By direct calculation, for any $I\in \+C(k)$,
  \begin{align*}
    \pi_k(I)=\sum_{S\in\+C(k-2), S\subset I}\pi_{k-2}(S)\pi_{S,2}(I\setminus S).
  \end{align*}
  Notice that for any $S\in\+C(k-2)$, $\Ex_{\pi_{k-2}} f^{(k-2)}=\Ex_{\pi_k}f^{(k)}=0$, and 
  \begin{align*}
    f^{(k-2)}(S) & = \sum_{T\in\+C_S(2)}\frac{2w(S\cup T)}{w(S)}f^{(k)}(S\cup T) = \Ex_{\pi_{S,2}}f_S^{(2)}.
  \end{align*}
  Thus we have
  \begin{align*}
    \Var{\pi_{k}}{f^{(k)}} & = \sum_{S\in\+C(k-2)}\pi_{k-2}(S)\Var{\pi_{S,2}}{f_S^{(2)}} + \sum_{S\in\+C(k-2)}\pi_{k-2}(S)\left(\Ex_{\pi_{S,2}}f_S^{(2)}\right)^2\\
    & = \sum_{S\in\+C(k-2)}\pi_{k-2}(S)\Var{\pi_{S,2}}{f_S^{(2)}} + \Var{\pi_{k-2}}{f^{(k-2)}}.
  \end{align*}
  The proof for $\Ent{}{}$ is completely analogous.
\end{proof}

Now we are ready to show the local-to-global principle for variances.

\begin{theorem}  \label[theorem]{thm:main-var}
    Let $\+C$ be a simplicial complex that is a $(a_0,...,a_{d-2})$-local-spectral expander, and let $s_k \defeq \frac{2}{1+a_k}$. Then, for any $2\le k\le d$,
    \begin{align*}
        \lambda_2(\RWdown{k})=\lambda_2(\RWup{k-1}) \le \frac{1}{v_{k-2}},
    \end{align*}
    where $v_k$ is recursively defined as
    \begin{align}\label{eqn:recursion}
        v_k = s_k - \frac{s_k-1}{v_{k-1}}, && v_0 = s_0.
    \end{align}
\end{theorem}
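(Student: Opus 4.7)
The plan is to proceed by induction on $k$, recasting the spectral bound as a variance contraction. By the Poincar\'e inequality~\eqref{eqn:poincare} combined with~\eqref{eqn:var-equiv}, the inequality $\lambda_2(\RWdown{k}) \le 1/v_{k-2}$ is equivalent to the contraction $\Var{\pi_{k-1}}{f^{(k-1)}} \le (1/v_{k-2})\Var{\pi_k}{f^{(k)}}$ for every $f^{(k)} : \+C(k) \to \=R$, and it is this contraction that I would establish. The equality $\lambda_2(\RWdown{k}) = \lambda_2(\RWup{k-1})$ is free from the cospectrality of $AB$ and $BA$. The base case $k = 2$ is immediate: since $\RWup{1} = \RWup{\emptyset,1} = (I + G_\emptyset)/2$, one has $\lambda_2(\RWdown{2}) = (1+\lambda_2(G_\emptyset))/2 \le (1+a_0)/2 = 1/s_0 = 1/v_0$.

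For the inductive step, the key idea is to decompose both variances by faces $S \in \+C(k-2)$ in such a way that the two decompositions share a common residual term. \Cref{lem:decomposition} directly gives
\begin{align*}
  \Var{\pi_k}{f^{(k)}} = \sum_{S \in \+C(k-2)} \pi_{k-2}(S)\,\Var{\pi_{S,2}}{f_S^{(2)}} + \Var{\pi_{k-2}}{f^{(k-2)}}.
\end{align*}
Meanwhile, a direct law of total variance applied to the two-step sampling $S \sim \pi_{k-2}$ followed by $v \sim \pi_{S,1}$ (which produces $S \cup v \sim \pi_{k-1}$, and whose inner expectation is $\Ex_{v \sim \pi_{S,1}} f^{(k-1)}(S\cup v) = f^{(k-2)}(S)$) yields
\begin{align*}
  \Var{\pi_{k-1}}{f^{(k-1)}} = \sum_{S \in \+C(k-2)} \pi_{k-2}(S)\,\Var{\pi_{S,1}}{f^{(k-1)}_S} + \Var{\pi_{k-2}}{f^{(k-2)}},
\end{align*}
where $f^{(k-1)}_S(v) \defeq f^{(k-1)}(S \cup v)$ for $v \in \+C_S(1)$. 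The two decompositions share the tail $\Var{\pi_{k-2}}{f^{(k-2)}}$.

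Next, for each link $\+C_S$ with $|S| = k-2$ the base case applied inside the link (whose leading spectral parameter is $a_{k-2}$) gives $\lambda_2(\RWdown{S,2}) \le 1/s_{k-2}$. A short check using $\pi_{S\cup v,1}(u) \propto \pi_k(S\cup\{v,u\}) \propto \pi_{S,2}(\{v,u\})$ shows $f^{(k-1)}_S = \Pup{S,1} f_S^{(2)}$, so Poincar\'e combined with~\eqref{eqn:var-equiv} in the link gives $\Var{\pi_{S,1}}{f^{(k-1)}_S} \le (1/s_{k-2})\Var{\pi_{S,2}}{f_S^{(2)}}$. Writing $A \defeq \sum_S \pi_{k-2}(S)\Var{\pi_{S,2}}{f_S^{(2)}}$ and $B \defeq \Var{\pi_{k-2}}{f^{(k-2)}}$, the two decompositions reduce to $\Var{\pi_k}{f^{(k)}} = A + B$ and $\Var{\pi_{k-1}}{f^{(k-1)}} \le A/s_{k-2} + B$. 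Applying the inductive hypothesis at level $k-1$ gives $B \le (1/v_{k-3})\Var{\pi_{k-1}}{f^{(k-1)}}$; substituting $A = \Var{\pi_k}{f^{(k)}} - B$ and solving the resulting linear inequality in $\Var{\pi_{k-1}}{f^{(k-1)}}$ produces
\begin{align*}
  \Var{\pi_{k-1}}{f^{(k-1)}} \le \frac{1/s_{k-2}}{1 - (1 - 1/s_{k-2})/v_{k-3}}\,\Var{\pi_k}{f^{(k)}} = \frac{1}{v_{k-2}}\,\Var{\pi_k}{f^{(k)}},
\end{align*}
the last equality being precisely the recursion $v_{k-2} = s_{k-2} - (s_{k-2}-1)/v_{k-3}$.

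The main obstacle I anticipate is not any single estimate but the alignment of the two decompositions: one needs the one-level law of total variance for $\Var{\pi_{k-1}}{f^{(k-1)}}$ to produce the same $\Var{\pi_{k-2}}{f^{(k-2)}}$ tail that appears in \Cref{lem:decomposition}, and the identification $f^{(k-1)}_S = \Pup{S,1} f_S^{(2)}$ so that the link-level contraction is applicable. Once these are in place, the induction closes through a single linear elimination that exactly reproduces the stated recursion for $v_k$.
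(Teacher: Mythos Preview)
Your argument is correct and follows essentially the same route as the paper: the same two decompositions (\Cref{lem:decomposition} at level $k$ and the law of total variance at level $k-1$ sharing the tail $\Var{\pi_{k-2}}{f^{(k-2)}}$), the same link-level contraction $\Var{\pi_{S,1}}{f_S^{(1)}}\le (1/s_{k-2})\Var{\pi_{S,2}}{f_S^{(2)}}$ via $f_S^{(1)}=\Pup{S,1}f_S^{(2)}$, and the same inductive hypothesis. The only difference is cosmetic: the paper introduces a free parameter $\epsilon$ and optimizes it, whereas you substitute $A=\Var{\pi_k}{f^{(k)}}-B$ and solve the resulting linear inequality directly, which yields the identical recursion for $v_{k-2}$.
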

\begin{proof}
    From the assumption on local expansion we get that for any $S \in \+C(k-2)$, where $2 \le k\le d$,
    \begin{align*}
      \lambda_2(G_S)\le a_{k-2} \implies \lambda_2(\RWdown{S,2})\le \frac{1+a_{k-2}}{2}.
    \end{align*}
    Then, the spectral gap $\lambda(\RWdown{S,2})=1-\lambda_2(\RWdown{S,2})\geq \frac{1-a_{k-2}}{2}$.
    By \eqref{eqn:poincare},
    this implies that for any $g:\+C_S(2)\rightarrow \=R$,
    \begin{align*}
      \Diri{\RWdown{S,2}}{g,g}\ge \left(\frac{1-a_{k-2}}{2}\right)\Var{\pi_{S,2}}{g}.
    \end{align*}
    Remembering that $s_{k-2} = \frac{2}{1+a_{k-2}}$, and \eqref{eqn:var-equiv}, we obtain
    \begin{align} \label{eqn:links}
      \Var{\pi_{S,2}}{g} \ge s_{k-2}\Var{\pi_{S,1}}{\Pup{S,1}g}.
    \end{align}
    
    
    We finish the proof by an induction on $k$.
    The base case of $k=2$ is straightforward by noticing that $v_0=s_0$.
    For the induction step on $k\ge 3$, by \Cref{lem:decomposition}, we have
    \begin{align}
        \Var{\pi_{k}}{f^{(k)}} &= \sum_{S\in\+C(k-2)}\pi_{k-2}(S)\Var{\pi_{S,2}}{f_S^{(2)}} + \Var{\pi_{k-2}}{f^{(k-2)}} \notag\\
        &\ge  s_{k-2}\sum_{S\in\+C(k-2)}\pi_{k-2}(S)\Var{\pi_{S,1}}{\Pup{S,1}f_S^{(2)}} + \Var{\pi_{k-2}}{f^{(k-2)}} \tag{by \eqref{eqn:links}}\\
        &=  (1-\epsilon)s_{k-2}\sum_{S\in\+C(k-2)}\pi_{k-2}(S)\Var{\pi_{S,1}}{\Pup{S,1}f_S^{(2)}} \notag\\
        &+\epsilon s_{k-2}\sum_{S\in\+C(k-2)}\pi_{k-2}(S)\Var{\pi_{S,1}}{\Pup{S,1}f_S^{(2)}}+ \Var{\pi_{k-2}}{f^{(k-2)}} \label{eqn:decomp-k},
    \end{align}
    where $\epsilon\ge 0$ will be chosen later.
    A similar decomposition holds for level $k-1$,
    \begin{align*}
        \Var{\pi_{k-1}}{f^{(k-1)}}=\sum_{S\in\+C(k-2)}\pi_{k-2}(S)\Var{\pi_{S,1}}{f_S^{(1)}} + \Var{\pi_{k-2}}{f^{(k-2)}},
    \end{align*}
    where $f_S^{(1)}(e)\defeq f^{(k-1)}(S\cup \{e\})$ for $e\in\+C_S(1)$.
    Together with the induction hypothesis this implies that
    \begin{align}\label{eqn:var-IH}
        \sum_{S\in\+C(k-2)}\pi_{k-2}(S)\Var{\pi_{S,1}}{f_S^{(1)}} \ge (v_{k-3}-1)\Var{\pi_{k-2}}{f^{(k-2)}}.
    \end{align}
    Finally, notice that $\Pup{S,1}f_S^{(2)} = f_S^{(1)}$.
    Plugging \eqref{eqn:var-IH} into \eqref{eqn:decomp-k} and choosing $\epsilon = \frac{s_{k-2}-1}{v_{k-3}s_{k-2}}$,
    \begin{align*}
        \Var{\pi_{k}}{f^{(k)}} &\ge  (1-\epsilon)s_{k-2}\Var{\pi_{k-1}}{f^{(k-1)}}\\
        &=  \left(s_{k-2}-\frac{s_{k-2}-1}{v_{k-3}}\right)\Var{\pi_{2}}{f^{(k-1)}} = v_{k-2}\Var{\pi_{k-1}}{f^{(k-1)}}.
    \end{align*}
    By \eqref{eqn:poincare} and \eqref{eqn:var-equiv}, this translates to the spectral gap bound
    $1-\lambda_2(\RWdown{k}) \ge 1-\frac{1}{v_{k-2}}$,
    namely, $\lambda_2(\RWdown{k}) \le \frac{1}{v_{k-2}}$.
\end{proof}

A particular interesting case is when $a_{k-2} = \frac{\gamma}{1-(d-k)\gamma}$, for any $\gamma \le 1/(d-1)$.
This profile of $a$'s is exactly the result of repeatedly applying the trickling down theorem, starting from level $d-2$ with $a_{d-2} = \gamma$, down to level $0$.
\begin{corollary}\label[corollary]{cor:trickling-down-profile}
  Let $\mathcal{C}$ be a pure simplicial complex of dimension $d$ and suppose that $a_{d-2}\le \gamma \le 1/(d-1)$. 
  Then, for any $2 \le k \le d$,
  \begin{align*}
      \lambda_2(\RWdown{k}) = \lambda_2(\RWup{k-1}) \le 1-\frac{1}{k}\left(\frac{1-(d-1)\gamma}{1-(d-k)\gamma}\right).
  \end{align*}
\end{corollary}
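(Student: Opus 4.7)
The plan is to reduce the corollary to \Cref{thm:main-var} in two steps: first propagate the bound $a_{d-2}\le\gamma$ down the complex using \Cref{thm:trickling-down}, and then solve the resulting recursion \eqref{eqn:recursion} in closed form. For the first step, I would iterate the trickling-down theorem: writing $\gamma_j$ for the bound on $a_j$ at level $j$ and noting that if $\gamma_j=\gamma/(1-m\gamma)$, then $\gamma_j/(1-\gamma_j)=\gamma/(1-(m+1)\gamma)$, so each application of trickling-down increments $m$ by one. Starting from $m=0$ at level $d-2$ yields $a_{k-2}\le \gamma/(1-(d-k)\gamma)$ for all $2\le k\le d$. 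The hypothesis $\gamma\le 1/(d-1)$ ensures all denominators remain positive (and hence all $\gamma_j<1$), so each trickling-down step is valid.

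For the second step, the key is to use a clean parameterization. Set $\alpha_k\defeq 1-(d-k)\gamma$, so $\{\alpha_k\}$ forms an arithmetic progression with common difference $\gamma$, and in particular $\alpha_{k+1}=\alpha_k+\gamma$ and $\alpha_{k-1}+\alpha_{k+1}=2\alpha_k$. Under the trickling-down profile, $a_{k-2}=\gamma/\alpha_k$, whence $s_{k-2}=2\alpha_k/\alpha_{k+1}$ and $s_{k-2}-1=\alpha_{k-1}/\alpha_{k+1}$. The map $s_k\mapsto v_k$ defined by \eqref{eqn:recursion} is increasing in $s_k$ whenever $v_{k-1}\ge 1$, so using the trickling-down upper bounds on $a_{k-2}$ legitimately yields an upper bound on $\lambda_2(\RWdown{k})$. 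The claim to prove by induction on $k$ is the closed form
\begin{align*}
  v_{k-2}\;=\;\frac{k\,\alpha_k}{k\,\alpha_k-\alpha_1},\qquad\text{equivalently}\qquad \frac{1}{v_{k-2}}\;=\;1-\frac{1}{k}\cdot\frac{1-(d-1)\gamma}{1-(d-k)\gamma},
\end{align*}
which is exactly the bound in the corollary once combined with \Cref{thm:main-var}. The base case $k=2$ is the identity $v_0=s_0=2\alpha_2/\alpha_3$, and this equals $2\alpha_2/(2\alpha_2-\alpha_1)$ because $\alpha_1+\alpha_3=2\alpha_2$.

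For the inductive step, substituting the hypothesis for $v_{k-3}$ into $v_{k-2}=s_{k-2}-(s_{k-2}-1)/v_{k-3}$ and clearing denominators reduces everything to the single arithmetic identity $k\alpha_k-(k-1)\alpha_{k+1}=\alpha_1$, which is a direct one-line computation from the definition of $\alpha_j$. The main ``obstacle'' is really just recognizing this closed form and isolating the right invariant; once $\alpha_k$ is introduced, the recursion telescopes and the induction is routine. Nothing in the argument is conceptually deep beyond combining the trickling-down bound with the recursion of \Cref{thm:main-var}, but choosing the parameterization that exposes the arithmetic progression is what keeps the calculation transparent.
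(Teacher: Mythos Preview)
Your proposal is correct and follows essentially the same approach as the paper: apply the trickling-down theorem iteratively to obtain the profile $a_{k-2}\le\gamma/(1-(d-k)\gamma)$, then verify by induction that the recursion \eqref{eqn:recursion} yields the claimed bound. Your parameterization $\alpha_k=1-(d-k)\gamma$ and the separated monotonicity observation make the algebra tidier than the paper's direct inequality-chasing through the induction, but the structure of the argument is the same.
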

\begin{proof}
  By the trickling down theorem (\cref{thm:trickling-down}), we immediately get that $a_{k-2} \le \frac{\gamma}{1-(d-k)\gamma}$. 
  Due to \cref{thm:main-var}, it suffices to prove that $\frac{1}{v_{k-2}} \le 1-\frac{1}{k}\left(\frac{1-(d-1)\gamma}{1-(d-k)\gamma}\right)$.
  For $k=2$, the inequality holds trivially by the definition of $v_0$. 
  Suppose that the inequality holds up to dimension $k-1 \ge 2$. Then,
  \begin{align*}
      \frac{1}{v_{k-3}} \le 1-\frac{1}{k-1}\left(\frac{1-(d-1)\gamma}{1-(d-k+1)\gamma}\right),
  \end{align*}
  and as $a_{k-2} \le \frac{\gamma}{1-(d-k)\gamma}$,
  \begin{align*}
    s_{k-2} = \frac{2}{a_{k-2}+1}\ge \frac{2(1-(d-k)\gamma)}{1-(d-k-1)\gamma}.
  \end{align*}
  As in \cref{thm:main-var},
    \begin{align*}
        v_{k-2} &= s_{k-2} - \frac{s_{k-2}-1}{v_{k-3}}\\
        &\ge s_{k-2} - \left(s_{k-2}-1\right)\left[1-\frac{1}{k-1}\left(\frac{1-(d-1)\gamma}{1-(d-k+1)\gamma}\right)\right]\\
        &= 1+(s_{k-2}-1)\frac{1}{k-1}\left(\frac{1-(d-1)\gamma}{1-(d-k+1)\gamma}\right)\\
        & \ge 1+\left(\frac{1-(d-k+1)\gamma}{1-(d-k-1)\gamma}\right)\frac{1}{k-1}\left(\frac{1-(d-1)\gamma}{1-(d-k+1)\gamma}\right)\\
        &= 1+\frac{1}{k-1}\left(\frac{1-(d-1)\gamma}{1-(d-k-1)\gamma}\right),
    \end{align*}
    the reciprocal of which is
    \begin{align*}
        \frac{1}{v_{k-2}} 
        & \le 1-\frac{1}{k}\left(\frac{1-(d-1)\gamma}{1-(d-k)\gamma}\right).
    \end{align*}
    The corollary follows by induction together with \cref{thm:main-var}.
\end{proof}

In particular, by setting $\gamma=1/d$, this retrieves \citet[Corollary 1.6]{AL20}. 

\begin{corollary}[\protect{\citealp[Corollary 1.6]{AL20}}]\label[corollary]{cor:1/k}
  If $\mathcal{C}$ is a $\frac{1}{d}$-local-spectral expander at level $d-2$, i.e. $a_{d-2} \le 1/d$,
    then for any $2\le k\le d$,
    \begin{align*}
        \lambda_2(\RWdown{k}) \le 1-\frac{1}{k^2}.
    \end{align*}
\end{corollary}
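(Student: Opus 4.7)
The plan is to derive this corollary as an immediate specialization of \Cref{cor:trickling-down-profile} by setting the parameter $\gamma$ to $1/d$. First I would verify that the hypothesis of \Cref{cor:trickling-down-profile} is satisfied: we need $\gamma \le 1/(d-1)$, and clearly $1/d \le 1/(d-1)$ holds, while the assumption $a_{d-2} \le 1/d$ gives $a_{d-2} \le \gamma$. So the corollary applies.

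Next I would substitute $\gamma = 1/d$ into the bound supplied by \Cref{cor:trickling-down-profile}. The numerator becomes
\begin{align*}
  1-(d-1)\gamma = 1 - \frac{d-1}{d} = \frac{1}{d},
\end{align*}
and the denominator becomes
\begin{align*}
  1-(d-k)\gamma = 1 - \frac{d-k}{d} = \frac{k}{d}.
\end{align*}
Therefore the ratio inside the bracket simplifies to $1/k$, which after multiplication by the prefactor $1/k$ yields $1/k^2$, giving exactly
\begin{align*}
  \lambda_2(\RWdown{k}) \le 1 - \frac{1}{k^2},
\end{align*}
as claimed.

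There is essentially no obstacle here, since the whole proof consists of a routine specialization; the only thing to double-check is the range of validity ($2 \le k \le d$) inherited from \Cref{cor:trickling-down-profile}, and that the hypothesis on $a_{d-2}$ in the statement matches what is needed to invoke the trickling-down machinery that underlies the parent corollary. Once the substitution is made the conclusion drops out with no further computation.
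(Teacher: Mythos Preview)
Your proposal is correct and matches the paper's own argument exactly: the paper also derives this corollary simply by setting $\gamma = 1/d$ in \Cref{cor:trickling-down-profile}. The substitution and the check that $1/d \le 1/(d-1)$ are all that is needed.
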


\Cref{cor:trickling-down-profile} and \Cref{cor:1/k} are alternative derivations of a useful result of \cite{AL20}, which requires minimal assumptions on the expansion of the local graphs nearest to the top level. If we are studying the uniform distribution on the top level, these graphs are unweighted, as opposed to the lower level graphs that have weights that count some difficult to analyze and possibly intractable quantities.

An important advantage of our method is that it can also be used when we assume entropy contraction factors on the local walks.
However, we do not know a counterpart to the trickling down theorem for entropies.

The contraction ratio $v_k$ is given by the recursion in \eqref{eqn:recursion}.
Next we will solve the recursion to give an explicit expression.
Do the following substitution $x_k \defeq \frac{v_k}{v_k-1}$ and notice $x_0=\frac{v_0}{v_0-1}=1+\frac{1+a_0}{1-a_0}$.
The recurrence simplifies into
\begin{align}\label{eqn:recurrence}
  x_k = \frac{1+a_k}{1-a_k}\cdot x_{k-1}+1,
\end{align}
and can be solved $x_k = \sum_{i=0}^{k} S_i^k + 1$, where $S_i^k \defeq \prod_{j=i}^k \frac{1+a_j}{1-a_j} = \prod_{j=i}^k \frac{1}{s_j-1}$.
Then,
\begin{align}   \label{eqn:v_k}
  v_k=\frac{x_k}{x_k-1} = 1+\frac{1}{\sum_{i=0}^k S_i^k}.
\end{align}
Our second largest eigenvalue bound is
\begin{align}\label{eqn:our-bound}
  \gamma_k = \frac{1}{v_{k-2}} = 1 - \frac{1}{\sum_{i=0}^{k-2} S_i^{k-2} + 1}.
\end{align}


\subsection{Comparison with \texorpdfstring{\cite{AL20}}{Alev and Lau 2020}}
\label[section]{sec:comparison}

Given a spectral profile $(a_0,\dots,a_{d-2})$,
recall our second largest eigenvalue bound \eqref{eqn:our-bound}.
In contrast, \Cref{thm:AL} \citep[Theorem 1.5]{AL20} achieves a different upper bound
\begin{align}  \label{eqn:AL-bound}
  \gamma_{k,AL} : = 1-\frac{1}{k}\prod_{i=0}^{k-2}(1-a_i).
\end{align}
We call a spectral profile \emph{admissible} if 
\begin{enumerate}
  \item for all $0\le i\le d-2$, $a_i<1$;
  \item for all $1\le i\le d-2$, $a_{i-1}\le \frac{a_i}{1-a_i}$.
\end{enumerate}
Note that the first condition here ensures that the random walk over the links are all connected,
and the second condition ensures that the spectral profile is consistent with the trickling down theorem, \Cref{thm:trickling-down}.

Our bound in \eqref{eqn:our-bound} is no better than the bound of \eqref{eqn:AL-bound} by \cite{AL20} when $(a_0,\dots,a_{d-2})$ is admissible.

\begin{proposition}  \label[proposition]{prop:comparison}
  Let $(a_0,\dots,a_{d-2})$ be an admissible spectral profile.
  For any $0\le k\le d-2$,
  \begin{align*}
    \gamma_k\ge \gamma_{k,AL}.
  \end{align*}
\end{proposition}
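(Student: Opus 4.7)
The plan is to reformulate the inequality $\gamma_k \geq \gamma_{k,AL}$ as a single pointwise polynomial bound in the $a_i$'s and then attack it by induction on $k$. Writing $P_k := \prod_{j=0}^{k-2}(1-a_j)$, the inequality is equivalent (after clearing denominators, using $P_k > 0$) to $P_k\bigl(1 + \sum_{i=0}^{k-2} S_i^{k-2}\bigr) \geq k$. Distributing and noting that $P_k \cdot S_i^{k-2} = \prod_{j=0}^{i-1}(1-a_j)\prod_{j=i}^{k-2}(1+a_j) =: T_i$ while $P_k = T_{k-1}$, the goal reduces to proving
\begin{align*}
    \sum_{i=0}^{k-1} T_i \;\geq\; k
\end{align*}
for every admissible profile $(a_0, \dots, a_{k-2})$.

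The first step is an auxiliary lemma: admissibility implies $a_j \geq a_0/(1 + j a_0)$ for every $j \geq 0$. Indeed, admissibility rearranges to $a_j \geq a_{j-1}/(1+a_{j-1})$, and since $\phi(x) := x/(1+x)$ is monotone increasing on $[0,\infty)$, a straightforward induction on $j$ yields the claim. The resulting pointwise bound then telescopes into
\begin{align*}
    \prod_{j=0}^{k-2}(1 + a_j) \;\geq\; \prod_{j=0}^{k-2} \frac{1 + (j+1)a_0}{1 + j a_0} \;=\; 1 + (k-1)a_0.
\end{align*}

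The main argument now proceeds by induction on $k$. The base case $k=2$ is immediate: $T_0+T_1=(1+a_0)+(1-a_0)=2$. For the inductive step, strip $a_0$ off the profile: the shifted profile $(a_1,\dots,a_{k-2})$ is again admissible (condition (2) restricts), so writing $\tilde T_j$ for the analogues of the $T_i$ built on it, a direct check gives $T_0^{(k)} = (1+a_0)\tilde T_0$ and $T_i^{(k)} = (1-a_0)\tilde T_{i-1}$ for $1 \leq i \leq k-1$. Hence
\begin{align*}
    \sum_{i=0}^{k-1} T_i^{(k)} \;=\; (1+a_0)\,\tilde T_0 \;+\; (1-a_0)\sum_{j=0}^{k-2}\tilde T_j.
\end{align*}
The induction hypothesis gives $\sum_j \tilde T_j \geq k-1$, while the lemma gives $(1+a_0)\tilde T_0 = \prod_{j=0}^{k-2}(1+a_j) \geq 1 + (k-1)a_0$. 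Adding, $\sum_i T_i^{(k)} \geq 1+(k-1)a_0 + (1-a_0)(k-1) = k$, as required.

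The main obstacle I expect is identifying the lemma $a_j \geq a_0/(1+j a_0)$: this is the ``dual'' of the trickling-down estimate, and it is precisely what lets the $(1+a_0)$-contribution in the split absorb the ``missing'' $(k-1)a_0$ that the $(1-a_0)$-contribution leaves behind in the induction. Once one sees that stripping along $a_0$ (rather than along $a_{k-2}$) is the move compatible with this lemma, the rest is mechanical; moreover, equality propagates throughout when the profile is trickling-down tight, recovering \Cref{cor:trickling-down-profile}.
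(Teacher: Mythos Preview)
Your argument is correct and takes a genuinely different route from the paper's. The paper works directly with the recursion $x_k=\frac{1+a_k}{1-a_k}\,x_{k-1}+1$, inducting by appending the \emph{top} index $a_k$ and invoking the auxiliary inequality $(k+1)a_k+\prod_{i=0}^{k}(1-a_i)\ge 1$ (\Cref{lem:spectral-profile-property}), whose proof uses only the one-step admissibility relation at the top of the profile. You instead expand the closed form into $\sum_i T_i$, induct by stripping the \emph{bottom} index $a_0$, and your key estimate is the ``dual'' product bound $\prod_{j=0}^{k-2}(1+a_j)\ge 1+(k-1)a_0$. The two routes are of comparable length; yours makes the equality case (trickling-down-tight profiles, \Cref{cor:trickling-down-profile}) especially transparent since equality propagates through every step, while the paper's version avoids the sign issue below.

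One small caveat: your derivation of the product bound via $a_j\ge a_0/(1+ja_0)$ appeals to monotonicity of $\phi$ on $[0,\infty)$ and tacitly needs all telescoping factors $(1+(j{+}1)a_0)/(1+ja_0)$ to be positive, so as written it only covers $a_0\ge 0$ (more generally $a_0>-1/(k-1)$). Admissibility alone does not force this. The patch is immediate, though: when $a_0\le -1/(k-1)$ the right-hand side $1+(k-1)a_0$ is nonpositive while the left-hand side is positive (each $a_j>-1$, else $s_j=2/(1+a_j)$ would be undefined), so the product bound holds trivially there. Alternatively, $\prod_{j=0}^{k-2}(1+a_j)\ge 1+(k-1)a_0$ can be proved by its own short induction on $k$ using only the one-step relation $a_1(1+a_0)\ge a_0$, which follows from admissibility with no sign hypothesis.
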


To show \Cref{prop:comparison}, we need to first show a lemma.

\begin{lemma}  \label[lemma]{lem:spectral-profile-property}
  Let $(a_0,\dots,a_{d-2})$ be an admissible spectral profile.
  For any $1\le k\le d-2$,
  \begin{align*}
    a_{k}(k+1)+\prod_{i=0}^{k}(1-a_i)\ge 1.
  \end{align*}
\end{lemma}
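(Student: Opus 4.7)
The plan is to prove \Cref{lem:spectral-profile-property} by induction on $k$.

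For the base case $k=1$, the target is $2a_1 + (1-a_0)(1-a_1) \ge 1$. Expanding the product gives $1 + a_1(1+a_0) - a_0$, so the inequality reduces to $a_1 \ge \frac{a_0}{1+a_0}$. A quick rearrangement shows this is equivalent to $a_0(1-a_1) \le a_1$, i.e.\ $a_0 \le \frac{a_1}{1-a_1}$, which is precisely admissibility condition (2) at $i=1$.

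For the inductive step, assume the claim holds at $k-1$ and write $P_{k-1} \defeq \prod_{i=0}^{k-1}(1-a_i)$, so the hypothesis says $P_{k-1} \ge 1 - k\,a_{k-1}$. Since $\prod_{i=0}^{k}(1-a_i) = P_{k-1}(1-a_k)$, the target $a_k(k+1) + P_{k-1}(1-a_k) \ge 1$ rearranges to
\begin{align*}
  a_k\bigl(k+1 - P_{k-1}\bigr) \ge 1 - P_{k-1},
\end{align*}
i.e.\ $a_k \ge \frac{1 - P_{k-1}}{k+1 - P_{k-1}}$. The function $x \mapsto \frac{x}{k+x}$ is increasing on $[0,\infty)$, and the inductive hypothesis gives $1 - P_{k-1} \le k\,a_{k-1}$. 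Applying monotonicity yields
\begin{align*}
  \frac{1-P_{k-1}}{k+1-P_{k-1}} \le \frac{k\,a_{k-1}}{k + k\,a_{k-1}} = \frac{a_{k-1}}{1+a_{k-1}}.
\end{align*}
So it suffices to show $a_k \ge \frac{a_{k-1}}{1+a_{k-1}}$, which is equivalent (using $a_k < 1$ from admissibility condition (1)) to $a_{k-1} \le \frac{a_k}{1-a_k}$. This is exactly admissibility condition (2) at index $i=k$, closing the induction.

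I do not anticipate any serious obstacle here; the only step that requires a small observation is recognizing that the monotonicity of $x \mapsto \frac{x}{k+x}$ lets one convert the inductive-hypothesis bound $1 - P_{k-1} \le k\,a_{k-1}$ into the clean upper bound $\frac{a_{k-1}}{1+a_{k-1}}$, which aligns exactly with what the trickling-down-type admissibility condition provides at the next index. Everything else is routine algebraic manipulation.
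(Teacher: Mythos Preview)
Your argument is essentially the same induction as the paper's: both reduce the inductive step to the inequality $a_k(1+a_{k-1}) \ge a_{k-1}$, which is just a rewriting of the admissibility condition $a_{k-1}\le \frac{a_k}{1-a_k}$. The paper expands directly to get $1+(k+1)(a_{k+1}+a_ka_{k+1}-a_k)\ge 1$, whereas you isolate $a_k$ and invoke monotonicity of $x\mapsto x/(k+x)$; these are the same computation packaged differently.

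One small caveat: your division steps (by $1+a_0$ in the base case, by $k+1-P_{k-1}$ and then effectively by $1+a_{k-1}$ in the inductive step) tacitly assume these quantities are positive, i.e.\ that $a_i>-1$ and $P_{k-1}<k+1$. The definition of ``admissible'' in the paper only demands $a_i<1$, with no lower bound, so as written your argument does not cover, e.g., $a_0=a_1=-1$ (where $P_1=4>3$). The paper's direct algebraic expansion avoids any such sign discussion. This is easily patched---either note that one may assume $a_i\ge 0$ without loss of generality in the spectral-profile setting, or handle the remaining sign cases by hand---but it is worth making explicit.
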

\begin{proof}
  We do an induction on $k$.
  For $k=1$, we have 
  \begin{align*}
    2a_1+(1-a_0)(1-a_1) & = 1+(a_1+a_0a_1-a_0) \ge 1,
  \end{align*}
  where the inequality is due to $a_0\le \frac{a_1}{1-a_1}$.
  For the induction step, suppose that the lemma holds for some $k\ge 1$,
  namely,
    $\prod_{i=0}^{k}(1-a_i) \ge 1 - a_k(k+1).$
  Thus,
  \begin{align*}
    a_{k+1}(k+2)+\prod_{i=0}^{k+1}(1-a_i) & \ge  a_{k+1}(k+2)+(1-a_{k+1})(1 - a_k(k+1)) \\
    & = 1 + (k+1)(a_{k+1}+a_ka_{k+1}-a_k) \ge 1,
  \end{align*}
  where the last inequality is because $a_k\le \frac{a_{k+1}}{1-a_{k+1}}$.
\end{proof}

With \Cref{lem:spectral-profile-property}, we can now prove \Cref{prop:comparison}.

\begin{proof}[Proof of \Cref{prop:comparison}]
  All we need to show is that for all $0\le k\le d-2$,
  \begin{align*}
    x_{k} \ge \frac{k+2}{\prod_{i=0}^{k}(1-a_i)}.
  \end{align*}
  We do an induction on $k$.
  For $k=0$, $x_0=\frac{2}{1-a_0}$ and the claim holds.
  
  By \eqref{eqn:recurrence}, we have that
  \begin{align*}
    x_k & = \frac{1+a_k}{1-a_k} x_{k-1}+1 \\
    & \ge \frac{(1+a_k)(k+1)}{\prod_{i=0}^{k}(1-a_i)}+1 = \frac{(1+a_k)(k+1)+\prod_{i=0}^{k}(1-a_i)}{\prod_{i=0}^{k}(1-a_i)} \tag*{(by induction hypothesis)}\\
    & \ge \frac{k+2}{\prod_{i=0}^{k}(1-a_i)} \tag*{(by \protect\Cref{lem:spectral-profile-property})}.
  \end{align*}
  This finishes the induction.
\end{proof}

Notice that the equality in \Cref{prop:comparison} holds when the trickling down theorem is tight.
In this case our bound coincides with \cite{AL20}, which is shown in \Cref{cor:trickling-down-profile}.


\section{Decay of relative entropy} \label[section]{sec:entropy}

In this section we obtain a local-to-global principle for entropy contractions.
We obtain bounds for the relative entropy decay of each step of $\RWdown{k}$, based solely on a property of the local walks on the faces. 
This answers a question raised by \cite{alev2020higher}: ``\dots{} is there a property of the local graphs $G_{\alpha}$ that would allow us to bound the (modified) log-Sobolev constant of the corresponding chain as opposed to the spectral gap?''

The idea is to follow the argument of \cref{thm:main-var} but with every occurrence of Var$()$ replaced with Ent$()$. 
Our initial assumptions for the local walks would now be entropy contraction, and the recursion would be the same. Let us state the theorem for clarity.

\begin{theorem}  \label[theorem]{thm:main-ent}
    Let $\+C$ be a simplicial complex that satisfies the local inequalities
    \begin{align*}
      \Ent{\pi_{S,2}}{f_S^{(2)}} \ge s_{k-2}\Ent{\pi_{S,1}}{f_S^{(1)}},
    \end{align*}
    for any $2\le k\le d$, $S \in \+C(k-2)$, and $f_S^{(2)}:\+C_S(2)\rightarrow \=R_{\ge 0}$,
    where $\{s_k\}$ are some entropy contraction factors greater than or equal to $1$. 
    Then, for any $2\le k\le d$ and $f^{(k)}:\+C(k)\rightarrow \=R_{\ge 0}$, we get the global inequalities
    \begin{align*}
        \Ent{\pi_{k}}{f^{(k)}} &\ge  v_{k-2}\Ent{\pi_{k-1}}{f^{(k-1)}},
    \end{align*}
    where $v_k$ is recursively defined as in \eqref{eqn:recursion}.
\end{theorem}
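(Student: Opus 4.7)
The plan is to mimic the proof of \Cref{thm:main-var} line by line, replacing $\Var{}{}$ with $\Ent{}{}$ throughout, and invoking the entropy part of \Cref{lem:decomposition} (which the authors already noted holds verbatim). The local Poincaré step that was derived from $\lambda_2(G_S)\le a_{k-2}$ is no longer needed: the hypothesis of the theorem directly supplies the two-step local inequality $\Ent{\pi_{S,2}}{f_S^{(2)}} \ge s_{k-2}\Ent{\pi_{S,1}}{f_S^{(1)}}$, which is exactly the analogue of \eqref{eqn:links} in the present setting. The base case $k=2$ is then immediate from $v_0 = s_0$ and the local hypothesis for $S=\emptyset$.

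For the inductive step at level $k\ge 3$, I would first apply the entropy version of \Cref{lem:decomposition} to obtain
\begin{align*}
  \Ent{\pi_k}{f^{(k)}} = \sum_{S\in\+C(k-2)}\pi_{k-2}(S)\Ent{\pi_{S,2}}{f_S^{(2)}} + \Ent{\pi_{k-2}}{f^{(k-2)}},
\end{align*}
then use the local hypothesis together with the identity $\Pup{S,1}f_S^{(2)} = f_S^{(1)}$ to lower-bound each local term by $s_{k-2}\Ent{\pi_{S,1}}{f_S^{(1)}}$. As in the variance proof, I would split the resulting sum with a convex combination parameter $\epsilon\ge 0$ to be chosen. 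A second application of \Cref{lem:decomposition} (now at level $k-1$) gives
\begin{align*}
  \Ent{\pi_{k-1}}{f^{(k-1)}} = \sum_{S\in\+C(k-2)}\pi_{k-2}(S)\Ent{\pi_{S,1}}{f_S^{(1)}} + \Ent{\pi_{k-2}}{f^{(k-2)}},
\end{align*}
and combining this with the induction hypothesis $\Ent{\pi_{k-1}}{f^{(k-1)}}\ge v_{k-3}\Ent{\pi_{k-2}}{f^{(k-2)}}$ yields the entropy analogue of \eqref{eqn:var-IH}. Setting $\epsilon = \tfrac{s_{k-2}-1}{v_{k-3}s_{k-2}}$ exactly as in the variance argument collapses the bound to $v_{k-2}\Ent{\pi_{k-1}}{f^{(k-1)}}$ via the recursion \eqref{eqn:recursion}, closing the induction.

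There is essentially no genuine obstacle: the only non-trivial ingredient, \Cref{lem:decomposition}, is already proved as an equality for both variance and entropy, and the rest is linear algebra with nonnegative coefficients, which entropy tolerates as well as variance. The conceptual shift is that the local factor $s_{k-2}$ must now be assumed rather than derived from a spectral gap, reflecting precisely the absence of a trickling-down theorem for relative entropy discussed after \Cref{thm:trickling-down}. I would therefore expect the write-up to consist almost entirely of the three-step chain above, with the verification that the choice of $\epsilon$ gives the same recursion $v_k = s_k - (s_k-1)/v_{k-1}$ being the only calculation to spell out.
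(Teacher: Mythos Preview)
Your proposal is correct and matches the paper's own proof essentially verbatim: the paper states that the proof is identical to that of \Cref{thm:main-var} with $\Var{}{}$ replaced by $\Ent{}{}$, and you have spelled out exactly that substitution, including the use of the entropy version of \Cref{lem:decomposition}, the identity $\Pup{S,1}f_S^{(2)}=f_S^{(1)}$, the same choice of $\epsilon$, and the same induction.
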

\begin{proof}
   The proof is identical to that of \cref{thm:main-var} except that Var$()$ is replaced by Ent$()$.
\end{proof}

Note that an explicit formula for $v_k$ is given in \eqref{eqn:v_k}.

As entropy contraction implies the modified log-Sobolev inequality,
we have the following corollary.

\begin{corollary}
    Let $\mathcal{C}$ be a simplicial complex that satisfies the assumptions of \cref{thm:main-ent}. Then, the following hold:
    \begin{itemize}
        \item for any $2\le k\le r$,
          $\rho(\RWdown{k})\ge1-\frac{1}{v_{k-2}}$;
        \item for any $1\le k\le r-1$,
         $\rho(\RWup{k})\ge1-\frac{1}{v_{k-1}}$. 
    \end{itemize}
\end{corollary}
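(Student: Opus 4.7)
The plan is to obtain both bounds from \Cref{thm:main-ent}, which rephrased says that for every $f^{(k)}:\+C(k)\to\=R_{\ge 0}$,
\[
\Ent{\pi_{k-1}}{\Pup{k-1}f^{(k)}} \;\le\; \frac{1}{v_{k-2}}\Ent{\pi_k}{f^{(k)}}.
\]
The down-walk case is then immediate: plugging this contraction into \eqref{eqn:entropy-ineq} gives $\Diri{\RWdown{k}}{f^{(k)},\log f^{(k)}} \ge (1-1/v_{k-2})\Ent{\pi_k}{f^{(k)}}$, and dividing through and taking the infimum in \eqref{eqn:mLSI} yields $\rho(\RWdown{k})\ge 1 - 1/v_{k-2}$.

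The up-walk case is subtler, because \eqref{eqn:entropy-ineq-up} features $\Ent{\pi_{k+1}}{\Pdown{k+1}f^{(k)}}$ and the only direct estimate of this quantity via \Cref{thm:main-ent} goes the wrong way (it produces a lower, not an upper, bound). The plan is to instead first derive a one-step entropy contraction for $\RWup{k}$ itself, and then invoke the general principle that such a contraction implies a lower bound on the modified log-Sobolev constant of a reversible chain. Setting $g^{(k+1)}\defeq\Pdown{k+1}f^{(k)}$ and applying \Cref{thm:main-ent} at level $k+1$ gives
\[
\Ent{\pi_{k+1}}{g^{(k+1)}} \;\ge\; v_{k-1}\,\Ent{\pi_k}{\Pup{k}g^{(k+1)}} \;=\; v_{k-1}\,\Ent{\pi_k}{\RWup{k}f^{(k)}},
\]
and combining with the data-processing inequality $\Ent{\pi_{k+1}}{\Pdown{k+1}f^{(k)}}\le\Ent{\pi_k}{f^{(k)}}$ yields the desired one-step contraction $\Ent{\pi_k}{\RWup{k}f^{(k)}} \le \tfrac{1}{v_{k-1}}\Ent{\pi_k}{f^{(k)}}$.

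To pass from this one-step contraction to an mLSI bound I use the identity
\[
\Diri{P}{f,\log f} \;\ge\; \Ent{\pi}{f} - \Ent{\pi}{Pf},
\]
valid for any reversible kernel $P$ with stationary distribution $\pi$ and any $f\ge 0$; this follows by expanding the Dirichlet form via detailed balance and observing that the gap equals $\Ex_\pi[(Pf)\log((Pf)/f)]$, which is nonnegative by the log-sum inequality since $\Ex_\pi(Pf) = \Ex_\pi f$. Taking $P=\RWup{k}$ and combining with the one-step contraction derived above yields $\rho(\RWup{k}) \ge 1 - 1/v_{k-1}$. The main obstacle is exactly this up-walk bound: one has to recognise that \eqref{eqn:entropy-ineq-up} cannot be used in the naive way that \eqref{eqn:entropy-ineq} is used for the down walk, and instead route through the one-step-contraction characterisation of the mLSI via the reversibility identity above; this contrast disappears in the variance setting because \eqref{eqn:var-equiv-up} holds with equality and the Poincar\'e inequality is itself precisely the corresponding one-step variance contraction.
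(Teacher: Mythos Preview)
Your argument is correct. The down-walk bound is obtained exactly as in the paper: combine \Cref{thm:main-ent} with \eqref{eqn:entropy-ineq}. For the up-walk bound the paper does not give an argument at all and simply refers to \cite{CGM21}; what you have written is precisely the standard route taken there: first obtain a one-step entropy contraction for $\RWup{k}$ by composing the contraction of $\Pup{k}$ from \Cref{thm:main-ent} (applied at level $k+1$) with the Jensen/data-processing bound $\Ent{\pi_{k+1}}{\Pdown{k+1}f^{(k)}}\le\Ent{\pi_k}{f^{(k)}}$, and then pass to an mLSI lower bound via the reversibility identity $\Diri{P}{f,\log f}\ge\Ent{\pi}{f}-\Ent{\pi}{Pf}$. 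Your verification of this last inequality (rewriting the defect as $\Ex_\pi[(Pf)\log((Pf)/f)]$ and invoking the log-sum inequality using $\Ex_\pi(Pf)=\Ex_\pi f$) is correct, and your diagnosis that \eqref{eqn:entropy-ineq-up} cannot be used directly because \Cref{thm:main-ent} only bounds $\Ent{\pi_{k+1}}{\Pdown{k+1}f^{(k)}}$ from below is exactly the point. So there is no substantive difference in approach; you have simply supplied the details the paper outsources.
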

\begin{proof}
  For the down-up walk, combine \Cref{thm:main-ent} with \eqref{eqn:entropy-ineq}. For a proof that applies to both walks, see \cite{CGM21}.
\end{proof}

We remark that our \Cref{thm:main-ent} is independently obtained by \citet[Theorem 5.4]{CLV20}.
For applications of \Cref{thm:main-ent}, we refer the reader to \cite{CLV20}.
Although the conditions of \Cref{thm:main-ent} seem restrictive,
it is in fact satisfied by certain high-dimensional expanders with bounded marginals, 
as shown in \cite{CLV20}.

%
%

\bibliographystyle{plainnat}
\bibliography{Log-Sob}

\begin{thebibliography}{20}
\providecommand{\natexlab}[1]{#1}
\providecommand{\url}[1]{\texttt{#1}}
\expandafter\ifx\csname urlstyle\endcsname\relax
  \providecommand{\doi}[1]{doi: #1}\else
  \providecommand{\doi}{doi: \begingroup \urlstyle{rm}\Url}\fi

\bibitem[Alev(2020)]{alev2020higher}
Vedat~Levi Alev.
\newblock Higher order random walks, local spectral expansion, and
  applications.
\newblock \emph{PhD Thesis, University of Waterloo}, 2020.

\bibitem[Alev and Lau(2020)]{AL20}
{Vedat Levi} Alev and {Lap Chi} Lau.
\newblock Improved analysis of higher order random walks and applications.
\newblock In \emph{{STOC}}, pages 1198--1211. {ACM}, 2020.

\bibitem[Anari et~al.(2019)Anari, Liu, {Oveis Gharan}, and Vinzant]{ALOV19}
Nima Anari, Kuikui Liu, Shayan {Oveis Gharan}, and Cynthia Vinzant.
\newblock Log-concave polynomials {II:} high-dimensional walks and an {FPRAS}
  for counting bases of a matroid.
\newblock In \emph{{STOC}}, pages 1--12, 2019.

\bibitem[Anari et~al.(2020)Anari, Liu, and {Oveis Gharan}]{ALO20}
Nima Anari, Kuikui Liu, and Shayan {Oveis Gharan}.
\newblock Spectral independence in high-dimensional expanders and applications
  to the hardcore model.
\newblock \emph{CoRR}, abs/2001.00303, 2020.

\bibitem[Bobkov and Tetali(2006)]{BT06}
Sergey~G. Bobkov and Prasad Tetali.
\newblock Modified logarithmic {S}obolev inequalities in discrete settings.
\newblock \emph{J. Theoret. Probab.}, 19\penalty0 (2):\penalty0 289--336, 2006.

\bibitem[Chen et~al.(2020{\natexlab{a}})Chen, Galanis, \v{S}tefankovi\v{c}, and
  Vigoda]{CGSV20}
Zongchen Chen, Andreas Galanis, Daniel \v{S}tefankovi\v{c}, and Eric Vigoda.
\newblock Rapid mixing for colorings via spectral independence.
\newblock \emph{CoRR}, abs/2007.08058, 2020{\natexlab{a}}.

\bibitem[Chen et~al.(2020{\natexlab{b}})Chen, Liu, and Vigoda]{CLV20}
Zongchen Chen, Kuikui Liu, and Eric Vigoda.
\newblock Optimal mixing of {G}lauber dynamics: Entropy factorization via
  high-dimensional expansion.
\newblock \emph{CoRR}, abs/2011.02075, 2020{\natexlab{b}}.

\bibitem[Chen et~al.(2020{\natexlab{c}})Chen, Liu, and Vigoda]{CLV20a}
Zongchen Chen, Kuikui Liu, and Eric Vigoda.
\newblock Rapid mixing of {G}lauber dynamics up to uniqueness via contraction.
\newblock \emph{CoRR}, abs/2004.09083, 2020{\natexlab{c}}.

\bibitem[Cryan et~al.(2021)Cryan, Guo, and Mousa]{CGM21}
Mary Cryan, Heng Guo, and Giorgos Mousa.
\newblock Modified log-{S}obolev inequalities for strongly log-concave
  distributions.
\newblock \emph{Ann. Probab.}, 49\penalty0 (1):\penalty0 506--525, 2021.

\bibitem[Diaconis and Saloff-Coste(1996)]{DS96}
Persi Diaconis and Laurent Saloff-Coste.
\newblock Logarithmic {S}obolev inequalities for finite {M}arkov chains.
\newblock \emph{Ann. Appl. Probab.}, 6\penalty0 (3):\penalty0 695--750, 1996.

\bibitem[Dinur and Kaufman(2017)]{DK17}
Irit Dinur and Tali Kaufman.
\newblock High dimensional expanders imply agreement expanders.
\newblock In \emph{{FOCS}}, pages 974--985. {IEEE} Computer Society, 2017.

\bibitem[Feng et~al.(2020)Feng, Guo, Yin, and Zhang]{FGYZ20}
Weiming Feng, Heng Guo, Yitong Yin, and Chihao Zhang.
\newblock Rapid mixing from spectral independence beyond the {B}oolean domain.
\newblock \emph{CoRR}, abs/2007.08091, 2020.

\bibitem[Jerrum et~al.(2004)Jerrum, Son, Tetali, and Vigoda]{JSTV04}
Mark Jerrum, Jung-Bae Son, Prasad Tetali, and Eric Vigoda.
\newblock Elementary bounds on {P}oincar\'e and log-{S}obolev constants for
  decomposable {M}arkov chains.
\newblock \emph{Ann. Appl. Probab.}, 14\penalty0 (4):\penalty0 1741--1765,
  2004.

\bibitem[Kaufman and Mass(2020)]{KM20}
Tali Kaufman and David Mass.
\newblock Local-to-global agreement expansion via the variance method.
\newblock In \emph{{ITCS}}, volume 151 of \emph{LIPIcs}, pages 74:1--74:14.
  Schloss Dagstuhl - Leibniz-Zentrum f{\"{u}}r Informatik, 2020.

\bibitem[Kaufman and Oppenheim(2020)]{KO20}
Tali Kaufman and Izhar Oppenheim.
\newblock High order random walks: {B}eyond spectral gap.
\newblock \emph{Combinatorica}, 40\penalty0 (1):\penalty0 245--281, 2020.

\bibitem[Levin and Peres(2017)]{LP17}
David~A. Levin and Yuval Peres.
\newblock \emph{Markov chains and mixing times}.
\newblock American Mathematical Society, Providence, RI, 2017.
\newblock Second edition.

\bibitem[Liu et~al.(2020)Liu, Mohanty, and Yang]{LMY20}
Siqi Liu, Sidhanth Mohanty, and Elizabeth Yang.
\newblock High-dimensional expanders from expanders.
\newblock In \emph{{ITCS}}, volume 151 of \emph{LIPIcs}, pages 12:1--12:32.
  Schloss Dagstuhl - Leibniz-Zentrum f{\"{u}}r Informatik, 2020.

\bibitem[Miclo(1997)]{Mic97}
Laurent Miclo.
\newblock Remarques sur l'hypercontractivit\'{e} et l'\'{e}volution de
  l'entropie pour des cha\^{\i}nes de {M}arkov finies.
\newblock In \emph{S\'{e}minaire de {P}robabilit\'{e}s, {XXXI}}, volume 1655 of
  \emph{Lecture Notes in Math.}, pages 136--167. Springer, Berlin, 1997.

\bibitem[Morris(2013)]{Mor13}
Ben Morris.
\newblock Improved mixing time bounds for the {T}horp shuffle.
\newblock \emph{Combin. Probab. Comput.}, 22\penalty0 (1):\penalty0 118--132,
  2013.

\bibitem[Oppenheim(2018)]{Opp18}
Izhar Oppenheim.
\newblock Local spectral expansion approach to high dimensional expanders part
  {I:} {D}escent of spectral gaps.
\newblock \emph{Discret. Comput. Geom.}, 59\penalty0 (2):\penalty0 293--330,
  2018.

\end{thebibliography}


\end{document}